\newtheorem{thm}{Theorem}[section]
\newtheorem{lem}[thm]{Lemma}
\theoremstyle{definition}
\theoremstyle{remark}
\def\BState{\State\hskip-\ALG@thistlm}
\newcommand{\QED}{\ifhmode\unskip\nobreak\fi\quad {\rm Q.E.D.}} % QED
\newcommand{\R}{\mathbb{R}}
\newcommand{\sign}{\text{sign}}
\newcommand\numberthis{\addtocounter{equation}{1}\tag{\theequation}}
\newcommand{\CComment}[1]{{\color{commentcol}\Comment{\parbox[t]{.5\linewidth}{#1}}}}
\algnewcommand{\LineComment}[1]{\State {\color{commentcol}\(\triangleright\) \parbox[t]{.9\linewidth}{#1}}}
\numberwithin{equation}{section}
\definecolor{commentcol}{gray}{0.5}
\title[Efficient Computation of the Bergsma-Dassios Sign Covariance]{Efficient Computation of the Bergsma-Dassios Sign Covariance}
\author{Luca Weihs} 
\address{Department of Statistics, University
  of Washington, Seattle, WA, U.S.A.}
\email{lucaw@uw.edu}
\author{Mathias Drton} 
\address{Department of Statistics, University
  of Washington, Seattle, WA, U.S.A.}
\email{md5@uw.edu}
\author{Dennis Leung} 
\address{Department of Statistics, University
  of Washington, Seattle, WA, U.S.A.}
\email{dmhleung@uw.edu}
\date{\today}                                
\begin{document}

\begin{abstract}
	In an extension of {K}endall's $\tau$,  Bergsma and Dassios (2014) introduced a covariance measure $\tau^*$ for two ordinal random variables that vanishes if and only if the two variables are independent.  For a sample of size $n$, a direct computation of $t^*$, the empirical version of $\tau^*$, requires $O(n^4)$ operations.  We derive an algorithm that computes the statistic using only $O(n^2\log(n))$ operations.
\end{abstract}

\keywords{Binary tree, Kendall's tau, nonparametric correlation, Spearman's rho, rank correlation, test of independence}

\maketitle

\section{Introduction}

Kendall's $\tau$ \citep{kendall} and Spearman's $\rho$ \citep{spearman} are popular measures of dependence between two random variables $X$ and $Y$.  However, both have the undesirable property that they may be equal to zero even when $X$ and $Y$ are not independent. Addressing this weakness, \citet{bergsma2014} have defined a new coefficient, $\tau^*$, which, under mild conditions on the joint distribution of $(X,Y)$, is zero if and only if $X$ and $Y$ are independent.  However, a computational price is to be paid for this property as a na\"{\i}ve computation of $t^*$, the empirical version of $\tau^*$, requires $O(n^4)$ time for a sample of size $n$. 

In this paper we present an algorithm which computes $t^*$ in
$O(n^2\log(n))$ time, inspired by a similar improvement for computing
(the empirical version of) Kendall's $\tau$.   Indeed, by leveraging
binary tree algorithms and observing that Kendall's statistic depends
only on the relative order of data points, \citet{fast_kendall} showed
that Kendall's  $\tau$ can be computed in $O(n\log(n))$ time rather than $O(n^2)$. We follow a similar strategy by exploiting the fact that computing $t^*$ relies only on the relative ordering of quadruples of points. Due to excessive time requirements, Bergsma and Dassios limit their computational examples to sample sizes with $n\leq 50$ and suggest approximating $t^*$ by random subsampling for larger samples.  As will be shown in Section \ref{sec:sim}, our algorithm computes $t^*$ exactly in less than a second for sample sizes in the thousands.

\subsection{Background and Setup}

Given a sample $(x_1,y_1), ..., (x_n,y_n)$ of points in $\mathbb{R}^2$, define the statistic 
\begin{align}
\label{eq:t_star}
	t^* &:= \frac{(n-4)!}{n!}\sum_{\substack{1\leq i,j,k,l \leq n \\ i,j,k,l\ \text{distinct}}} a(x_i,x_j,x_k,x_l)a(y_i,y_j,y_k,y_l) 
\end{align}
where
\begin{align*}
	a(z_1,z_2,z_3,z_4) := \sign(|z_1-z_2| + |z_3-z_4| - |z_1-z_3| - |z_2-z_4|).
\end{align*}
Here $t^*$ is the U-statistic, U standing for unbiased, corresponding to the population coefficient $\tau^* := Ea(X_1,X_2,X_3,X_4)a(Y_1,Y_2,Y_3,Y_4)$ of \cite{bergsma2014} where $(X_1,Y_1),...,(X_4,Y_4)$ are random vectors drawn independently from some bivariate distribution on $\mathbb{R}^2$. \cite{bergsma2014} introduce $t^*$ not as a U-statistic but as the closely related biased V-statistic; we consider the U-statistic as it simplifies some of the computations in Sections \ref{sec:untied} and \ref{sec:tied} but present modifications to our algorithm that allow one to compute the V-statistic in Appendix \ref{sec:v_stat}. A comprehensive overview of U/V-statistics and their properties can be found in \citet{serfling1980}.

As noted by \cite{bergsma2014}, we may rewrite the function $a$ as
\begin{align}
\label{eq:a_as_ind}
\begin{split}
	a(z_1,z_2,z_3,z_4) &= I(z_1,z_3 < z_2,z_4) + I(z_1,z_3 > z_2,z_4) \\
	&\ \ - I(z_1,z_2<z_3,z_4) - I(z_1,z_2 > z_3,z_4)
\end{split}
\end{align}
where $I(z_1,z_2 < z_3,z_4)$ is the indicator of the event $\max(z_1,z_2) < \min(z_3,z_4)$. After rewriting $a$ in this way we see that computation of the $t^*$ statistic requires only knowledge of the relative positioning of the observations for which we make the following definitions.  Let $(x_1,y_1),...,(x_4,y_4)$ be four points relabelled so that $x_1\leq x_2\leq x_3\leq x_4$.  We then say that the points are 
\begin{align*}
	\left.
		\begin{array}{lll}
			\text{\emph{inseparable}} & \mbox{if } \ \ 
				\begin{array}{@{}l@{}}
					\text{$x_2=x_3$ or there exists a permutation $\pi$ of $\{1,2,3,4\}$} \\
					\text{so that $y_{\pi(1)}\leq y_{\pi(2)}=y_{\pi(3)}\leq y_{\pi(4)}$,}
				\end{array}
				\end{array}
					\right.
\end{align*}
and if they are not inseparable, then we call them
\begin{align*}
	\left\{
		\begin{array}{lll}
			\text{\emph{concordant}} & \mbox{if } \text{$\max(y_1,y_2) < \min(y_3,y_4)$\ \ or \ $\max(y_3,y_4) < \min(y_1,y_2)$,} \\
			\text{\emph{discordant}} & \mbox{if }  \text{$\max(y_1,y_2) > \min(y_3,y_4)$ and $\max(y_3,y_4) > \min(y_1,y_2) $.}
		\end{array}
	\right.
\end{align*}
These definitions categorize all quadruples of points, that is, any quadruple of points must be exactly one of inseparable, concordant, or discordant. Moreover, when all coordinates are distinct any collection of four points will be either concordant or discordant, see Figure \ref{fig:concordant}. 
%In particular, given four points $(x_1,y_1),...,(x_4,y_4)$ relabelled so that $x_1\leq x_2\leq x_3\leq x_4$, we will say that the points are 
%\begin{align*}
%	\left\{
%		\begin{array}{lll}
%			\text{\emph{inseparable}} & \mbox{if } \ \ 
%				\begin{array}{@{}l@{}}
%					\text{$x_2=x_3$ or there exists a permutation $\pi$ of $\{1,2,3,4\}$} \\
%					\text{so that $y_{\pi(1)}\leq y_{\pi(2)}=y_{\pi(3)}\leq y_{\pi(4)}$,}
%				\end{array}\\
%			\text{\emph{concordant}} & \mbox{if } \text{$\max(y_1,y_2) < \min(y_3,y_4)$\ \ or \ $\max(y_3,y_4) < \min(y_1,y_2)$,} \\
%			\text{\emph{discordant}} & \mbox{if }  \text{$\max(y_1,y_2) > \min(y_3,y_4)$ and $\max(y_3,y_4) > \min(y_1,y_2) $,}
%		\end{array}
%	\right.
%\end{align*}
%where $(x_1,y_1),...,(x_4,y_4)$ are not considered concordant or discordant if they are inseparable. 
%
We motivate calling points inseparable by noting that, in the
$x_2=x_3$ case, we cannot draw a line parallel to the $y$-axis that
separates the $x$ values into two groups. Similarly in the case of $y_{\pi(1)}\leq y_{\pi(2)}=y_{\pi(3)}\leq y_{\pi(4)}$, there exists no such line parallel to the $x$-axis that separates the $y$ values into two groups.

\begin{figure}
        \centering
         \subfloat[Concordant\label{fig:pos_con}]{%
		\includegraphics[width=.3\textwidth]{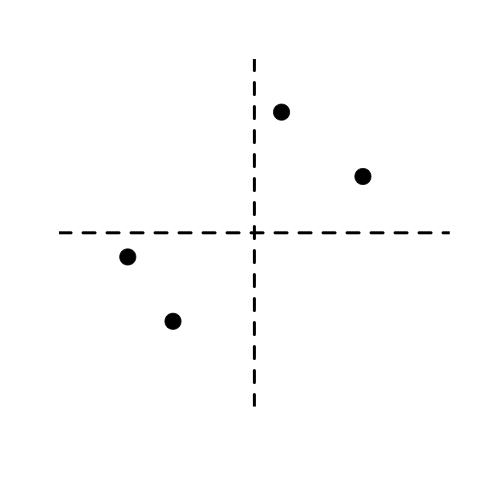} \includegraphics[width=.3\textwidth]{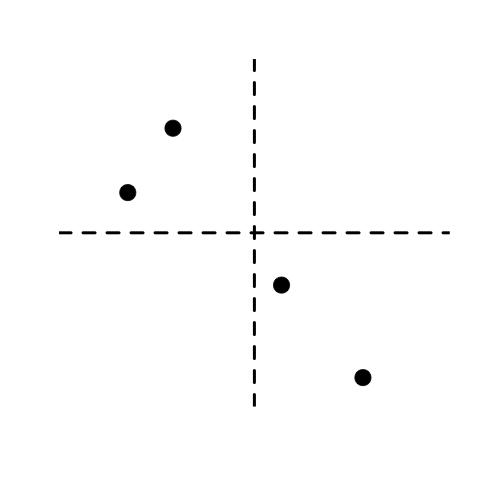}
	}
	\subfloat[Discordant\label{fig:dis_con}]{%
		\includegraphics[width=.3\textwidth]{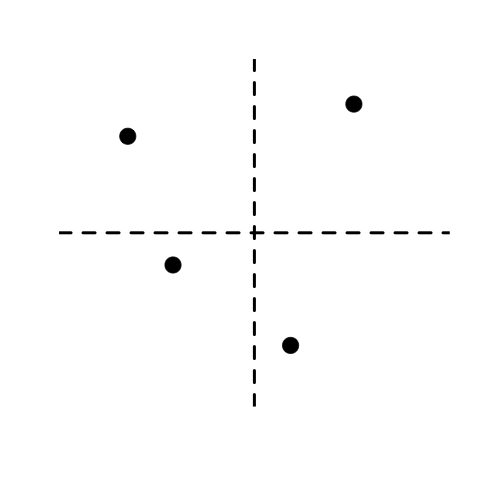}
	} 
    
%        \begin{subfigure}[b]{0.6\textwidth}
%                \includegraphics[width=.494\textwidth]{pos_con} \includegraphics[width=.494\textwidth]{neg_con}
%                \caption{}
%                \label{fig:pos_con}
%        \end{subfigure}%
%        ~ 
%        \begin{subfigure}[b]{0.3\textwidth}
%                \includegraphics[width=\textwidth]{dis_cor}
%                \caption{Discordant}
%                \label{fig:dis_con}
%        \end{subfigure}
        \caption{Relative position of points within quadrants does not matter, only that they remain in their respective quadrants.}
        \label{fig:concordant}
\end{figure}

We will derive two algorithms for the computation of $t^*$, the first works only in the case that the data contains no ties, that is all $x_1,...,x_n$ are distinct and similarly for $y_1,...,y_n$, and the second works for all data. While the second algorithm is strictly more general than the first it is also substantially complicated by the need to consider the case of inseparable points. We present the algorithm for data without ties in Section \ref{sec:untied} and give the general algorithm in Section \ref{sec:tied}.

\subsection{A Preliminary Lemma}

Before moving on, it will be useful to rewrite $t^*$ to capture a certain permutation invariance and state a basic, but very useful, lemma. Let $C(n,4)=\{\{i,j,k,l\}:1\leq i<j<k<l\leq n\}$, and $S_4$ be the set of permutations on 4 elements. For ease of notation, for any $\pi\in S_4$ and $(z_1,z_2,z_3,z_4)\in\R^4$ we  define $z_{\pi(1,2,3,4)} := (z_{\pi(1)},...,z_{\pi(4)})$.
We may then rewrite \eqref{eq:t_star} as
\begin{allowdisplaybreaks}
\begin{align*}
	t^* &= \frac{(n-4)!}{n!}\sum_{\{i,j,k,l\}\in C(n,4)}\sum_{\pi\in S_4}a(x_{\pi(i,j,k,l)})a(y_{\pi(i,j,k,l)}) \\
	&= \frac{(n-4)!}{n!}\sum_{\{i,j,k,l\}\in C(n,4)}b_{ijkl}, \numberthis \label{eq:t_star_rewrite}
\end{align*}
\end{allowdisplaybreaks}
where 
\[
	b_{ijkl} := \sum_{\pi\in S_4(i,j,k,l)} a(x_{\pi(i,j,k,l)})a(y_{\pi(i,j,k,l)})
\]
is clearly invariant to any permutation of $i,j,k,l$.

We now characterize the possible values $b_{ijkl}$ may take.

\begin{lem} \label{lem:b_values}
Let $A = \{(x_1,y_1),(x_2,y_2),(x_3,y_3),(x_4,y_4)\}\subset\mathbb{R}^2$. Then
\begin{align*}
	b_{1234} = \left\{
		\begin{array}{lll}
			16 & \mbox{if } \text{the points in $A$ are concordant}\\
			-8 & \mbox{if }  \text{the points in $A$ are discordant} \\ 
			0 & \mbox{if }  \text{the points in $A$ are inseparable}
		\end{array}
	\right.
\end{align*}
\end{lem}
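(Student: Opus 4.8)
The plan is to recast $a$ in terms of the three ways of splitting four indices into two pairs and then to exploit the action of $S_4$ on these splittings. Write $A=\{13|24\}$, $B=\{12|34\}$, $C=\{14|23\}$ for the three perfect matchings of $\{1,2,3,4\}$, and say that a matching is \emph{separated} by a tuple $(z_1,z_2,z_3,z_4)$ if one of its two pairs carries the two strictly smallest entries and the other the two strictly largest. Reading off \eqref{eq:a_as_ind}, the sum $I(z_1,z_3<z_2,z_4)+I(z_1,z_3>z_2,z_4)$ is the indicator that $A$ is separated and $I(z_1,z_2<z_3,z_4)+I(z_1,z_2>z_3,z_4)$ is the indicator that $B$ is separated; moreover at most one matching can be separated, since the set of two smallest entries is unique whenever it is well defined and no two of $A,B,C$ share a pair. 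Hence, setting $f(A)=1$, $f(B)=-1$, $f(C)=0$, we obtain $a(z_1,z_2,z_3,z_4)=f(M)$ if the tuple separates some matching $M$, and $a=0$ if it separates none. A short check on sorted values $v_1\le v_2\le v_3\le v_4$ shows that a separated matching exists if and only if $v_2<v_3$.

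By the permutation invariance of $b_{1234}$ I may assume $x_1\le x_2\le x_3\le x_4$, and I would first dispose of the inseparable case. If $x_2=x_3$, then by the criterion above no matching is separated by any reordering of the $x$-coordinates, so $a(x_{\pi(1,2,3,4)})=0$ for every $\pi\in S_4$ and thus $b_{1234}=0$; the case where the two middle $y$-values coincide is identical with the roles of $x$ and $y$ exchanged. This settles the value $0$.

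For the concordant and discordant cases the points are not inseparable, so the $x$-values separate a unique matching $M_x$ and the $y$-values a unique matching $M_y$, both in $\{A,B,C\}$. Viewing $\pi\in S_4$ as acting on matchings by permuting labels, the position-matching $A$ is separated by $(x_{\pi(1)},\dots,x_{\pi(4)})$ exactly when $\pi^{-1}(M_x)=A$, and likewise for $B$; hence $a(x_{\pi(1,2,3,4)})=f(\pi^{-1}(M_x))$ and $a(y_{\pi(1,2,3,4)})=f(\pi^{-1}(M_y))$. Substituting $\rho=\pi^{-1}$ gives
\[
	b_{1234}=\sum_{\rho\in S_4} f(\rho(M_x))\,f(\rho(M_y)).
\]
The action of $S_4$ on $\{A,B,C\}$ is the standard surjection onto $S_3=\mathrm{Sym}(\{A,B,C\})$ with kernel the Klein four-group, so each element of $S_3$ is hit by exactly $4$ permutations and
\[
	b_{1234}=4\sum_{\bar\rho\in S_3} f(\bar\rho(M_x))\,f(\bar\rho(M_y)).
\]

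It remains to evaluate this according to whether $M_x=M_y$. Since $x_1\le x_2\le x_3\le x_4$ with $x_2<x_3$ forces $M_x=B$, the concordance condition $\max(y_1,y_2)<\min(y_3,y_4)$ (or its reverse) is exactly $M_y=B=M_x$, while discordance is exactly $M_y\ne M_x$. If $M_x=M_y$, then as $\bar\rho$ ranges over $S_3$ the value $\bar\rho(M_x)$ hits each of $A,B,C$ twice, giving $b_{1234}=8\,(f(A)^2+f(B)^2+f(C)^2)=8(1+1+0)=16$. If $M_x\ne M_y$, then $(\bar\rho(M_x),\bar\rho(M_y))$ runs once through each of the six ordered pairs of distinct matchings, so $b_{1234}=4\sum_{u\ne v}f(u)f(v)=4\big[(f(A)+f(B)+f(C))^2-(f(A)^2+f(B)^2+f(C)^2)\big]=4(0-2)=-8$. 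The main obstacle is the first paragraph: pinning down the matching reformulation of $a$ and, in particular, checking its behavior at ties so that the inseparable case is genuinely forced to $0$; once that is in place, the group action reduces the remaining evaluation to the two short orbit counts above.
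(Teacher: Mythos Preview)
Your argument is correct and is genuinely different from the paper's proof. The paper proceeds by direct case analysis: in the concordant case it observes that $a(x_{\pi(1,2,3,4)})=a(y_{\pi(1,2,3,4)})$, so each summand is a square, and then explicitly counts the $16$ permutations $\pi$ for which $a(x_{\pi(1,2,3,4)})\ne 0$; in the discordant case it first proves an auxiliary ``flip'' lemma (swapping $y_1,y_2$ or $y_3,y_4$ leaves $b_{1234}$ unchanged) to normalise to $y_1<y_2$, $y_3<y_4$, and then checks the eight surviving products by hand to see each equals $-1$. Your route replaces all of this enumeration with the observation that $a$ is a class function of the perfect matching separated by the tuple, so that $b_{1234}$ becomes an $S_4$-orbit sum that factors through the surjection $S_4\twoheadrightarrow S_3=\mathrm{Sym}\{A,B,C\}$; the values $16$ and $-8$ then drop out of two tiny inner-product computations over $S_3$. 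The payoff is that you avoid both the flip lemma and the explicit eight-term check, and the answer is visibly governed by whether $M_x=M_y$. The paper's approach, in contrast, is more elementary in the sense that it needs no group-action language, only counting, at the price of being longer and more ad hoc. Your handling of the inseparable case (middle $x$-values tied or middle $y$-values tied forces every $a$-factor on that coordinate to vanish) matches the paper's and is the one place where the two proofs essentially coincide.
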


The proof of Lemma~\ref{lem:b_values} is a straightforward but lengthy case-by-case analysis and we defer it to Appendix \ref{sec:b_values}.

\section{The Algorithm for Data Without Ties} \label{sec:untied}

Throughout this section we  assume that $(x_1,y_1),...,(x_n,y_n)$ contain no ties, that is, $x_1,...,x_n$ are pairwise distinct and so are $y_1,...,y_n$. As there are no ties, every quadruple of points is either concordant or discordant.  It follows from Equation \eqref{eq:t_star_rewrite} and Lemma \ref{lem:b_values} that
\begin{align*}
	t^* &= \frac{(n-4)!}{n!}\sum_{\{i,j,k,l\}\in C(n,4)}b_{ijkl} \\
	&= \frac{(n-4)!}{n!}\sum_{\{i,j,k,l\}\in C(n,4)} \Big[16\cdot I(\{\text{$(x_i,y_i),...,(x_l,y_l)$ are concordant}\}) \\
	&\ \ \ \ \ \ \ \ \ \ \ \ \ \ \ \ \ \ \ \ \ \ \ \ \ \ \ \ \ \ \ \  \ \ -8\cdot I(\{\text{$(x_i,y_i),...,(x_l,y_l)$ are discordant}\})\Big] \\
	&= \frac{(n-4)!}{n!}(16\cdot N_c - 8\cdot N_d) \\
	&= \frac{(n-4)!}{n!}(24\cdot N_c) - \frac{1}{3}, \numberthis \label{eq:simple_untied_t_star}
\end{align*}
where $N_c$ and $N_d$ are the numbers of concordant and discordant quadruples in $(x_1,y_1),...,(x_n,y_n)$, respectively, and the last equality holds since every quadruple of points is either concordant or discordant implying that ${n \choose 4} = N_d + N_c$. Thus computing $t^*$ requires only computing the number of concordant quadruples of points. We now show that this can be done efficiently.

Suppose we have relabeled the points so that $x_1<x_2<...<x_n$. Rewriting sums we have that
\begin{allowdisplaybreaks}
\begin{align*}
	&\ N_c = \sum_{1\leq i<j<k<l\leq n} I(\text{$(x_i,y_i),(x_j,y_j),(x_k,y_k),(x_l,y_l)$ are concordant})) \\
	&= \sum_{3\leq k\leq n-1} \sum_{k<l\leq n} \sum_{1\leq i < j<k} \hspace{-2mm} I(\text{$(x_i,y_i),(x_j,y_j),(x_k,y_k),(x_l,y_l)$ are concordant})) \\
	&= \sum_{3\leq k\leq n-1} \sum_{k<l\leq n} \sum_{1\leq i < j<k} \hspace{-2mm} I(\text{$y_i,y_j < y_k,y_l$}) + I(\text{$y_k,y_l < y_i,y_j$}) \\
	&= \sum_{3\leq k\leq n-1} \sum_{k<l\leq n}{M_<(k,l) \choose 2} + {M_>(k,l) \choose 2}
\end{align*}
\end{allowdisplaybreaks}
where we define
\begin{align*}
	M_<(k,l) &:= |\{i:\ 1\leq i<k,\ y_i < \min(y_k, y_l)\}|, \\
	M_>(k,l) &:= |\{i:\ 1\leq i<k,\ y_i > \max(y_k, y_l)\}|.
\end{align*}
The last line in the above summation is, effectively, the algorithm. Note that the summation is over $O(n^2)$ terms and, consequently, if we can find $M_<(k,l)$ and $M_>(k,l)$ in $O(\log(n))$ time then we have found an algorithm for computing $N_c$ in $O(n^2\log(n))$ time. To find $M_<(k,l)$ and $M_>(k,l)$ in $O(\log(n))$ time we use a binary tree data structure with an appropriate balancing algorithm to ensure that inserts and searching can be done in $O(\log(n))$ time.  One example of this type of data structure are the so-called red-black trees \citep{red_black}. In particular, given that we have inserted the values $y_1,y_2,...,y_{k-1}$ into a red-black tree we may insert another $y_k$ into the tree in $O(\log(k))$ time and a simple extension of the traditional red-black framework allows one, for any $y$, to find $|\{1\leq i\leq k-1: y_i<y\}|$ and $|\{1\leq i\leq k-1: y_i>y\}|$ in $O(\log(k))$ time.

Combining the above observations, Algorithm \ref{alg:untied} gives an $O(n^2\log(n))$ procedure for finding the number of concordant quadruples which is easily extended to a computation of $t^*$ via Equation \eqref{eq:simple_untied_t_star}. Note that in Algorithm \ref{alg:untied} there is a preprocessing step in which  we  sort the $x_1,...,x_n$ values in ascending order and then reorder the $y_i$ to match this new order. Since this preprocessing can be done in worst case $O(n\log(n))$ time with a number of algorithms, merge-sort for example, it is not a significant component of the overall asymptotic run time analysis.

 \begin{algorithm}[t]
\caption{}\label{alg:untied}
\begin{algorithmic}[1]
\Procedure{NumConcordant}{($x_1,y_1$),...,($x_n,y_n$)} 
\State $x\gets (x_1,....,x_n)$
\State $y\gets (y_1,....,y_n)$
\State Sort $x$ in ascending order and relabel $y$ to match this new order
\State $rbTree \gets $ empty red-black tree
\State $totalConcordant \gets 0$

\For{$k = 1,....,n-1$}
\For{$\ell = k+1,....,n$}
	\State $minY \gets min(y_k,y_\ell)$
	\State $maxY \gets max(y_k,y_\ell)$
	\State $numLess \gets $ number of elements $<minY$ in $rbTree$
	\State $numGreater \gets $ number of elements $>maxY$ in $rbTree$
	\State $totalConcordant = totalConcordant + {numLess \choose 2} + {numGreater \choose 2}$
\EndFor
\State Insert $y_k$ into $rbTree$
\EndFor
\State \Return $totalConcordant$
\EndProcedure
\end{algorithmic}
\end{algorithm}

\section{The General Algorithm} \label{sec:tied}

Now suppose that there are no restrictions on the values of $(x_1,y_1),...,(x_n,y_n)$ and that we have reordered the points so that $x_1\leq ... \leq x_n$. 
For any $3\leq k\leq l\leq n$, let
\begin{allowdisplaybreaks}
\begin{align}
	top(k,l) &= |\{1\leq i< k: x_i\not= x_k\ \text{and}\ y_i> \max(y_k,y_l)\} | ,  \label{eq:top} \\
	mid(k,l) &= |\{1\leq i< k: x_i\not= x_k\ \text{and} \min(y_k,y_l) < y_i< \max(y_k,y_l)\} |,\\
	bot(k,l) &= |\{1\leq i< k: x_i\not= x_k\ \text{and}\ y_i< \min(y_k,y_l)\} | ,\\
	\mathit{eqMin}(k,l) &= |\{1\leq i< k: x_i\not= x_k\ \text{and}\ y_i = \min(y_k,y_l)\}  |,\\
	\mathit{eqMax}(k,l) &= |\{1\leq i< k: x_i\not= x_k\ \text{and}\ y_i = \max(y_k,y_l)\} |. \label{eq:eqmax}
\end{align}
\end{allowdisplaybreaks}
These quantities correspond to a partitioning of the points $(x_i,y_i)$ with $i<k$ and $x_i\not=x_k$.  We illustrate this partitioning  in Figure \ref{fig:partition}.
\begin{figure}[t]
        \centering
        \includegraphics[width=.4\textwidth]{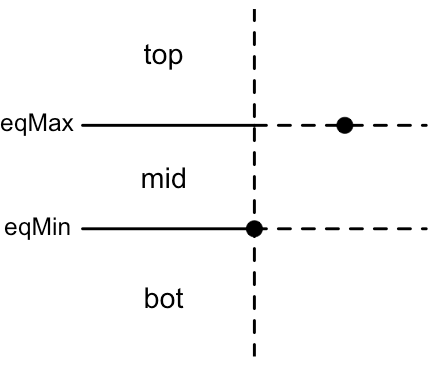} \caption{Partitioning of the points with $x$ value strictly less than two given points. Solid lines correspond to $\mathit{eqMax}$ and $\mathit{eqMin}$, the points whose $y$-values equal the maximum or minimum of the $y$ values of the two given points, respectively.}
        \label{fig:partition}
\end{figure}
For fixed $3\leq k<l\leq n$ we have, by Lemma \ref{lem:b_values} and since $x_1\leq ... \leq x_n$,
\begin{allowdisplaybreaks}
\begin{align*}
	\sum_{1\leq i<j< k }\hspace{-2mm} b_{ijkl} &= 16\cdot \underbrace{|\{1\leq i<j<k:\ \text{$i,j,k,l$ correspond to concordant points}\}|}_{:=N_{\rm con}(k,l)} \\
	&\ \ -8\cdot \underbrace{|\{1\leq i<j<k:\ \text{$i,j,k,l$ correspond to discordant points}\}|}_{:=N_{\rm dis}(k,l)} \\
	&= 16\cdot N_{\rm con}(k,l) -8\cdot N_{\rm dis}(k,l).
\end{align*}
\end{allowdisplaybreaks}
Hence, similarly as in the case without ties, we may write
\begin{allowdisplaybreaks}
\begin{align*}
	\sum_{\{i,j,k,l\}\in C(n,4)}b_{ijkl} &= \sum_{3\leq k\leq n-1} \sum_{k<l\leq n} \sum_{1\leq i < j<k} b_{ijkl} \\
	&= \sum_{3\leq k\leq n-1} \sum_{k<l\leq n} 16\cdot N_{\rm con}(k,l) -8\cdot N_{\rm dis}(k,l).
\end{align*}
\end{allowdisplaybreaks}
Again the last line of the above summation is effectively the algorithm.  
Since the sums are over $O(n^2)$ terms, if we can show that $N_{\rm con}(k,l)$ and $N_{\rm dis}(k,l)$ can be computed in $O(\log(n))$ time then we have obtained an $O(n^2\log(n))$ algorithm for computing $t^*$.   We show next that this is indeed possible, beginning with the observation that
\begin{align} \label{eq:num_con}
	N_{\rm con}(k,l) &= {top(k,l) \choose 2} + {bot(k,l) \choose 2},
\end{align}
if $y_k=y_l$ then 
\begin{align} \label{eq:num_dis_eq}
	N_{\rm dis}(k,l) = 0,
\end{align}
and if $y_k\not =y_l$ then 
\begin{align*}
	N_{\rm dis}(k,l) &= top(k,l)\cdot \left(mid(k,l) + \mathit{eqMin}(k,l)+ bot(k,l)\right) \numberthis \label{eq:num_dis_ineq}\\
	&+ bot(k,l)\cdot \left(mid(k,l) + \mathit{eqMax}(k,l)\right) \\
	&+ \mathit{eqMin}(k,l)\cdot (mid(k,l) + \mathit{eqMax}(k,l)) \\
	&+ \mathit{eqMax}(k,l)\cdot mid(k,l) \\
	& + {mid(k,l) \choose 2} - \sum_{y\in unique(k,l)}{|\{1\leq i< k: x_k\not= x_i\ \text{and}\ y_i=y\}| \choose 2} 
\end{align*}
where 
\begin{align*}
	unique(k,l) := \{y_i :1 \leq\kern-.2em i\kern-.2em <\kern-.2em k\ \text{and}\ x_i\not=x_k\ \text{and}\ \min(y_k,y_l)\kern-.2em <\kern-.2em y_i\kern-.2em <\kern-.2em \max(y_k,y_l)\}.
\end{align*}

Suppose we have a red-black tree into which we have inserted all $y_i$ with $1\leq i < k$ and $x_i\not=x_k$.  Then it is clear that the quantities in Equations \eqref{eq:top}-\eqref{eq:eqmax} can each be computed in $O(\log(k))$ time. Note that, unlike in the untied case, we require that the red-black tree not include any $y_i$ values corresponding to $x_i=x_k$; accomplishing this algorithmically is very simple: as we iterate across the $x_k$ values we delay inserting their associated $y_k$ values into the red-black tree until we reach a $x_{l}$ with $x_l\not=x_{l-1}$, upon reaching such an $x_l$ we insert all postponed $y$ values into the red-black tree and then restart the postponing of $y$ values starting with $y_l$.

We see that, as in the discussion of Algorithm \ref{alg:untied}, we can progressively compute almost all of the quantities in Equations \eqref{eq:num_con} and \eqref{eq:num_dis_ineq} with each iteration taking $O(\log(n))$ time. The only complication is the computation of
\begin{align} \label{eq:untied_comp_sum}
	\sum_{y\in unique(k,l)}{|\{1\leq i< k: x_k\not= x_i\ \text{and}\ y_i=y\}| \choose 2}
\end{align}
which corresponds to all quadruples of points $(x_i,y_i),(x_j,y_j),(x_k,y_k),(x_l,y_l)$ for which $\min(y_k,y_l)<y_i=y_j<\max(y_k,y_l)$.  These are inseparable and are being over-counted by ${mid(k,l) \choose 2}$. Note that this summation is in the reverse order of what we would like in order to simply generalize Algorithm \ref{alg:untied}. In particular, there is a condition on the $y$ values corresponding to $i$ and $j$ which is suppressed by the aggregate counts available from a query on a red-black tree.
We have, however, already established a methodology to count values such as \eqref{eq:untied_comp_sum}. In particular, note that
\begin{allowdisplaybreaks}
\begin{align}
	 &\sum_{3\leq k<l\leq n} \sum_{y\in unique(k,l)}{|\{1\leq i< k: x_k\not= x_i\ \text{and}\ y_i=y\}| \choose 2} \nonumber \\
	 & = \sum_{1\leq i < j<k<l\leq n} I(\{ x_k\not= x_j \ \text{and}\ y_i=y_j \ \text{and}\ \min(y_k,y_l) <y_i < \max(y_k,y_l) \}) \nonumber \\
	 &  = \sum_{1\leq i < j \leq n-2} I(\{y_i=y_j\})\cdot \underbrace{|\{k: j< k \leq n \ \text{and}\ x_k\not= x_j\ \text{and}\ y_j < y_k\}|}_{:=\ top^*(j)} \nonumber \\
	 &\hspace{40 mm} \cdot \underbrace{|\{k: j< k \leq n\ \text{and}\ x_k\not= x_j\ \text{and}\ y_j > y_k\}|}_{:=\ bot^*(j)} \nonumber \\
	 & = \displaystyle\sum_{j\in\{n-2,n-1,...,2\}}\sum_{i\in\{j-1,j-2,...,1\}} I(\{y_i=y_j\})\cdot top^*(j) \cdot bot^*(j). \label{eq:dis_extra_con}
\end{align}
\end{allowdisplaybreaks}
It follows that all that is needed to compute the total contribution of the term in Equation \eqref{eq:untied_comp_sum} to $t^*$ is to run a modified version of Algorithm \ref{alg:untied} across the data in reverse order.  Our algorithm becomes the following:
\begin{enumerate}[(i)]
	\item Perform a first pass across the data where we ignore the effect of \eqref{eq:untied_comp_sum} and count all other quantities.
	\item Perform a second pass across the data in reverse order to compute \eqref{eq:dis_extra_con}.
	\item Appropriately combine the results of (i) and (ii) to obtain $t^*$.
\end{enumerate}
This amounts to over-counting discordant quadruples on a first pass and then undoing this over-counting on a second pass. Since both of these passes over the data require $O(n^2\log(n))$ time,  our general Algorithm \ref{alg:tied}, which leverages the above observations, computes $t^*$ in $O(n^2\log(n))$ time. An implementation of Algorithm \ref{alg:tied} is available in the R package TauStar accessible via CRAN, the Comprehensive R Archive Network\footnote{See \url{https://cran.r-project.org/web/packages/TauStar/index.html}} \citep*{R, taustar-package}.

 \begin{algorithm}[t]
 \caption{~\\ Algorithm for efficiently computing $t^*$ in the general case. Comments are displayed in {\color{commentcol} gray}.}\label{alg:tied}
\begin{algorithmic}[1]
\Procedure{$t^*$}{($x_1,y_1$),...,($x_n,y_n$)} 
\State $x\gets (x_1,....,x_n)$
\State $y\gets (y_1,....,y_n)$
\State Sort $x$ in ascending order and relabel $y$ to match this new order
\State $rbTree \gets $ empty red-black tree \CComment{Used in first pass through data.}
\State $savedYValues \gets$ empty list \CComment{A list to store y values whose insertion into the red-black tree is delayed.}
\State $totalConcordant \gets 0$ \CComment{Total concordant quadruples counted so far.}
\State $totalDiscordant \gets 0$ \CComment{Total discordant quadruples counted so far.}
\For{$k \gets 1,....,n-1$} \label{line:loop1}
	\If{$k \not= 1$ and $x_{k-1} \not= x_k$}
		\CComment{If $k\not=1$ and $x_k\not= x_{k-1}$ insert all delayed $y$ values into the red-black tree. In any case, save $y_k$ to be inserted in the tree on some future iteration.} \label{line:delayed}
		\For{$yVal$ in $savedYValues$}
			\State Insert $yVal$ into $rbTree$
		\EndFor
		\State Empty the list $savedYValues$
	\EndIf
	\State Append $y_k$ to $savedYValues$ \label{line:beforeloop2}
	\For{$\ell \gets k+1,..., n$}
		\CComment{Loop over $\ell > k$ and use \eqref{eq:num_con}, \eqref{eq:num_dis_eq},\eqref{eq:num_dis_ineq} while ignoring contributions of \eqref{eq:untied_comp_sum}.} 
		\State $minY \gets min(y_k,y_\ell)$
		\State $maxY \gets max(y_k,y_\ell)$
		\State $top \gets $ number of elements $>maxY$ in $rbTree$
		\State $mid \gets $ number of elements $<maxY$ and $>minY$ in $rbTree$
		\State $bot \gets $ number of elements $<minY$ in $rbTree$
		\State $\mathit{eqMin} \gets$ number of elements equal to $minY$ in $rbTree$
\State $\mathit{eqMax} \gets$ number of elements equal to $maxY$ in $rbTree$
		\State $totalConcordant \gets totalConcordant + {top \choose 2} + {bot \choose 2}$ \label{line:total_con}
		\If{$minY \not= maxY$}
			\State $totalDiscordant \gets$ \parbox[t]{.5\linewidth}{$totalDiscordant + {mid \choose 2} + top\cdot mid + top\cdot bot + mid\cdot bot + \mathit{eqMin}\cdot (top + mid + \mathit{eqMax}) + \mathit{eqMax}\cdot(mid + bot)$}
		\EndIf
	\EndFor
\EndFor
\LineComment{\parbox[t]{.9\linewidth}{In the next loop we will run along the data in reverse to undo the over-counting resulting from ignoring the contribution of \eqref{eq:untied_comp_sum}.}}
\State Empty the list $savedYValues$
\State $revRbTree \gets $ empty RB tree \CComment{Used in second pass over the data.}
\For{$j \gets n,....,2$}
\algstore{tiedalg}
\end{algorithmic}
\end{algorithm}

\begin{algorithm}[]
\begin{algorithmic}[1]
\algrestore{tiedalg}
\If{$j \not= n$ and $x_{j+1} \not= x_j$}
		\CComment{Inserting the delayed values similarly as in Line \ref{line:delayed}.}
		\For{$yVal$ in savedYValues}
			\State Insert $yVal$ into $revRbTree$
		\EndFor
		\State Empty the list $savedYValues$
	\EndIf
	\State Append $y_j$ to $savedYValues$
	\For{$i \gets j-1,....,1$} \CComment{Use \eqref{eq:dis_extra_con} to compute the number of over counts.}
		\State $minY \gets min(y_i,y_j)$
		\State $maxY \gets max(y_i,y_j)$
		\State $top \gets $ number of elements $>maxY$ in $revRbTree$
		\State $bot \gets $ number of elements $<minY$ in $revRbTree$ 
		\If{$minY = maxY$}
			\State $totalDiscordant \gets totalDiscordant - top\cdot bot$
		\EndIf
	\EndFor
\EndFor
\State \Return $\frac{1}{n(n-1)(n-2)(n-3)}(16\cdot totalConcordant - 8\cdot totalDiscordant)$ \label{line:return}
\EndProcedure
\end{algorithmic}
\end{algorithm}

\section{Simulations}\label{sec:sim}
We test the run times of Algorithm \ref{alg:tied} and a na\"{\i}ve implementation, both written in C++ and available through R in previously mentioned TauStar package, for sample sizes $n$ ranging from 100 to 300; the implementation of Algorithm \ref{alg:tied} uses the red-black tree C library of \citet{martinian}. The results of these simulations are presented in Table \ref{tab:run_times}. As the table shows, the $O(n^4)$ running time of the na\"{\i}ve algorithm becomes already a practical concern for sample sizes in the hundreds while Algorithm \ref{alg:tied} is essentially instant for such sample sizes. Table \ref{tab:run_times_fast} provides a perspective on the run time of Algorithm \ref{alg:tied} for substantially larger samples.

\begin{table}[t]
\caption{Run times of the na\"{\i}ve algorithm and Algorithm \ref{alg:tied} for various sample sizes (in seconds and averaged over 10 samples).} \label{tab:run_times}
\begin{tabular}{|c|ccccc|} \hline
Sample Size & 100 & 150 & 200 & 250 & 300 \\ \hline
Algorithm \ref{alg:tied} & 0.0009 & 0.0023 & 0.0043 & 0.0072 & 0.01 \\
Na\"{\i}ve Algorithm & 0.287 &  1.55 &  5.58 & 14.34 & 28.95 \\ \hline
\end{tabular}
\end{table}

\begin{table}[t]
\caption{Run times of Algorithm \ref{alg:tied} for larger sample sizes (in seconds and  averaged over 10 samples).} \label{tab:run_times_fast}

\begin{tabular}{|c|ccccc|} \hline
Sample Size & 1000 &  3250 & 5500 & 7750 & 10000 \\ \hline
Algorithm \ref{alg:tied} & 0.1265 &  1.7354 &  5.2744 &  11.1833 & 19.115 \\ \hline
\end{tabular}
\end{table}

It is possible to approximate $t^*$, or in other words, estimate $\tau^*$ by a Monte-Carlo subsampling procedure where, for small $m < n$, subsets of size $m$ are repeatably selected from the data at random and the value of $t^*$ on each of these subsets is then averaged. Indeed, the case of $m=4$ is a strategy suggested by \citet{bergsma2014}. While our algorithm makes the computation of $t^*$ on moderate to large samples feasible, an approximate strategy will be necessary for very large samples. Unfortunately, resampling procedures require choosing a number of resampling iterations and, as is shown by Table \ref{tab:resample}, choosing too few iterations can result in a estimator with large variance.  Table \ref{tab:resample} also suggests that a choice of $m>4$ may be useful.\footnote{R code to reproduce the results of Tables \ref{tab:run_times}-\ref{tab:resample} can be found on the first author's webpage: \url{http://www.stat.washington.edu/~lucaw/public_resources/eff_comp_2015/tables.R}}

\newsavebox{\smallvec}
\savebox{\smallvec}{$\left(\begin{smallmatrix}0\\0\end{smallmatrix}\right)$}
\newsavebox{\smlmat}
\savebox{\smlmat}{$\left(\begin{smallmatrix}1&0\\0&1\end{smallmatrix}\right)$}
\begin{table}[t]
\caption{Sample variance of resampling-based estimates of $\tau^*$ relative to the sample variance of $t^*$ computed for all data. Here relative variance is the ratio of the former and the latter variance.  The variances are computed from 1000 samples of size $n=1000$.  Resampled subsets were of size $m \in \{4,30\}$.  The samples were drawn as pairs of independent $N(0,1)$ random variables.
% variance was computed across 1000 samples of size $n$ from a 
% $N($\usebox{\smallvec}, \usebox{\smlmat}$)$ distribution.
} \label{tab:resample}

\begin{tabular}{| >{\centering\arraybackslash} m{6em} | >{\centering\arraybackslash} m{2.5em} >{\centering\arraybackslash} m{2em} >{\centering\arraybackslash} m{2em} >{\centering\arraybackslash} m{2em} >{\centering\arraybackslash} m{2em} >{\centering\arraybackslash} m{2em} >{\centering\arraybackslash} m{2em} |} \hline
\# Resamples & 200 & 400 & 800 & 1600 & 3200 & 6400 & 12800 \\ \hline
Relative Var. ($m=4$) & 3932.84 & 2118.1 & 911.67 & 472.67 & 230.82 & 115.57 & 57.03 \\
Relative Var. ($m=30$) & 8.24 & 4.19 & 2.43 & 1.67 & 1.21 & 1.11 & 1.04 \\ \hline
\end{tabular}
\end{table}

%\begin{figure}[h!]
%  \centering
%      \includegraphics[width=0.8\textwidth]{run_time_comparison}
%  \caption{The average run time, over 10 iterations, for various sample sizes. Run times of a na\"{\i}ve implementation (solid line) and the algorithm %\ref{alg:untied} (dashed line).} \label{fig:alg_compare}
%\end{figure}

\section{Conclusion}
\label{sec:conclusion}
We have presented an algorithm which computes the $U$-statistic $t^*$ corresponding to the $\tau^*$ sign covariance of \citet{bergsma2014} in $O(n^2\log(n))$ time, substantially outperforming a na\"{\i}ve implementation.  The computational savings in our algorithm are driven by the use of binary trees and the permutation invariance inherent in $t^*$ (recall Lemma~\ref{lem:b_values}).

\appendix

\section{Modifications for the V-Statistic }\label{sec:v_stat}

This section provides an overview of necessary modifications to Algorithm \ref{alg:tied} in order to compute the V-statistic version of $t^*$. Suppose, as usual, that we have reordered the pairs $(x_1,y_1),...,(x_n,y_n)$ so that $x_1\leq x_2\leq ...\leq x_n$. Then the V-statistic for $\tau^*$ is
%\begin{allowdisplaybreaks}
\begin{align*}
	&t_V^* = \frac{1}{n^4}\sum_{1\leq i,j,k,l\leq n} a(x_i,x_j,x_k,x_l)a(y_i,y_j,y_k,y_l) \\
	&= \frac{1}{n^4}\left(\sum_{1\leq i< j< k< l\leq n} b_{ijkl} + \sum_{1\leq i< j< k\leq n} \frac{b_{ijkk} + b_{ijjk} + b_{iijk}}{2} + \sum_{1\leq i < k\leq n}\frac{b_{iikk}}{4}\right)\\
	&= \frac{1}{n^4}\left(\sum_{1\leq i< j< k< l\leq n} b_{ijkl} +  \sum_{1\leq i< j< k\leq n} \frac{b_{ijkk}+ b_{iijk}}{2} + \sum_{1\leq i < k\leq n}\frac{b_{iikk}}{4}\right).
\end{align*}
%\end{allowdisplaybreaks}
Here, the second equality holds since $a(x_i,x_j,x_k,x_l)a(y_i,y_j,y_k,y_l)=0$ if any three of $i,j,k,l$ are equal.  The third equality holds because $b_{ijjk}=0$ for all $i< j< k$; indeed, $x_i\leq x_j \leq x_k$ implies that $b_{ijjk}$ corresponds to an inseparable collection of points. Note that, in the above equations, we have coefficients of $\frac{1}{2}$ on $b_{ijkk},b_{iijk}$ and $\frac{1}{4}$ on $b_{iikk}$, these are corrective factors to account for the fact that the number of permutations of four elements where exactly two are equal is $|S_4|/2$ while the number of permutations where exactly two pairs of two are equal is $|S_4|/4$. Now we may continue to rewrite $t^*_V$ as
%\begin{allowdisplaybreaks}
\begin{align*}
	&t_V^* = \frac{1}{n^4} \left( \sum_{1\leq i< j< k< l\leq n} b_{ijkl} + \sum_{1\leq i< j< k\leq n} \frac{b_{ijkk}+ b_{iijk}}{2} + \sum_{1\leq i < k\leq n}\frac{b_{iikk}}{4} \right)\\ 
	&= \frac{1}{n^4} \left( \sum_{1\leq i< j< k< l\leq n} b_{ijkl} + \sum_{1\leq i< j< k\leq n} \frac{b_{ijkk}}{2}+ \sum_{1\leq i< k< l\leq n}\frac{b_{iikl}}{2} + \sum_{1\leq i < k\leq n}\frac{b_{iikk}}{4} \right)\\ 
	&= \frac{1}{n^4}\sum_{3\leq k\leq n} \Bigg(\sum_{k<l\leq n}\left(\sum_{1\leq i < j<k} b_{ijkl} + \sum_{1\leq i < k}\frac{b_{iikl}}{2}\right) + \sum_{1\leq  i<  j < k} \frac{b_{ijkk}}{2} + \sum_{1\leq  i < k} \frac{b_{iikk}}{4} \Bigg).
\end{align*}
%\end{allowdisplaybreaks}
If $k=n$ then $\sum_{k<l\leq n}$ is the empty sum which we define to equal 0. For a fixed $k<l$ we know already, from Section \ref{sec:tied}, how to compute $\sum_{1\leq i < j<k} b_{ijkl}$ efficiently using a red-black tree and since $b_{iikl},b_{ijkk}$, and $b_{iikk}$ can only correspond to inseparable or concordant quadruples it is easy to see that
\begin{align}
	\sum_{1\leq i <k} \frac{1}{2} b_{iikl} &= 8\cdot (top(k,l) + bot(k,l)) \label{eq:iikl},\\ 
	\sum_{1\leq i < j <k} \frac{1}{2} b_{ijkk} &= 8\cdot \left({top(k,k) \choose 2} + {bot(k,k)\choose 2}\right) \label{eq:ijkl},\\ 
	\sum_{1\leq i <k} \frac{1}{4}b_{iikk} &= 4\cdot (top(k,k) + bot(k,k)) \label{eq:iikk}.
\end{align}
Thus we may compute $t_V^*$ by running Algorithm \ref{alg:tied} with the following modifications:
\begin{enumerate}[(i)]
	\item Change line \ref{line:loop1} to
	\begin{algorithmic}[1]
		\For{$k = 1,....,n$}
		\EndFor
	\end{algorithmic}
	This corresponds to the outer sum of $t^*_V$.
	\item After line \ref{line:beforeloop2} add the lines:
	\begin{algorithmic}[1]
		\State $top \gets$ number of elements $>y_k$ in $rbTree$
		\State $bot \gets$ number of elements $<y_k$ in $rbTree$
		\State $totalConcordant \gets totalConcordant + \frac{1}{2}\left({top\choose 2} + {bot\choose 2}\right) + \frac{1}{4}(top + bot)$
	\end{algorithmic}
	This accounts for the effect of \eqref{eq:ijkl} and \eqref{eq:iikk}.
	
	\item Change line \ref{line:total_con} to
	\begin{algorithmic}[1]
		\State $totalConcordant \gets totalConcordant + {top \choose 2} + {bot \choose 2} + \frac{1}{2}(top + bot)$
	\end{algorithmic}
	This corresponds to \eqref{eq:iikl}.
	
	\item Change line \ref{line:return} to
	\begin{algorithmic}[1]
	\State \Return $\frac{1}{n^4}(16\cdot totalConcordant - 8\cdot totalDiscordant)$
	\end{algorithmic}
\end{enumerate}
Finally, note that this Algorithm for computing $t^*_V$ clearly remains $O(n^2\log(n))$.
\section{Proof of Lemma \ref{lem:b_values}}\label{sec:b_values}

By permutation invariance, suppose we have relabeled so that $x_1\leq x_2\leq x_3\leq x_4$. We have 3 cases: \\
\begin{enumerate}[(i)]
	\item {\em The points in $A$ are inseparable.}	
	The fact that $b_{1234}=0$ is an immediate consequence of Equation \eqref{eq:a_as_ind}. \\

	\item {\em The points in $A$ are concordant.}	
	In this case we must have that $x_2<x_3$ and either $\max(y_1,y_2) < \min(y_3,y_4)$ or $\min(y_1,y_2) > \max(y_3,y_4)$. By symmetry we need only consider the case when $\max(y_1,y_2) < \min(y_3,y_4)$. By Equation \eqref{eq:a_as_ind} it follows, with some thought, that $a(x_{\pi(1,2,3,4)}) = a(y_{\pi(1,2,3,4)})$ for all permutations $\pi\in S_4$ and thus, for any $\pi\in S_4$ we have $a(x_{\pi(1,2,3,4)})a(y_{\pi(1,2,3,4)}) =  a(x_{\pi(1,2,3,4)})^2$ with
	\begin{align*}
 a(x_{\pi(1,2,3,4)})^2 
		&=\left\{
		\begin{array}{lll}
			1 & \mbox{if }  \text{$\max(x_{\pi(1)}, x_{\pi(2)}) < \min(x_{\pi(3)}, x_{\pi(4)})$ or} \\
			&\phantom{\mbox{if }} \text{$\min(x_{\pi(1)}, x_{\pi(2)}) > \max(x_{\pi(3)}, x_{\pi(4)})$ or}
				\\
				&\phantom{\mbox{if }} \text{$\max(x_{\pi(1)}, x_{\pi(3)}) < \min(x_{\pi(2)}, x_{\pi(4)})$ or}
				\\
				&\phantom{\mbox{if }} \text{$\min(x_{\pi(1)}, x_{\pi(3)}) > \max(x_{\pi(2)}, x_{\pi(4)})$},
				\\
			0 & \text{otherwise.}
		\end{array}
	\right.
	\end{align*}
	But since $x_1\leq x_2 < x_3\leq x_4$ we have that $a(x_{\pi(1,2,3,4)})a(y_{\pi(1,2,3,4)})=1$ if and only if $\{\pi(1),\pi(2)\} \in \{\{1,2\}, \{3,4\}\}$ or $\{\pi(1),\pi(3)\} \in \{\{1,2\}, \{3,4\}\}$. There are exactly $2^4=16$ such permutations and thus $b_{1234}=16$. \\
	
	\item {\em The points in $A$ are discordant.}	
	Once again we must have that $x_2<x_3$. It then follows, by the definition of discordant, that $y_1\not=y_2$ and $y_3\not=y_4$. We prove an intermediary lemma:
	
	\begin{lem}\label{lem:flip}
		Suppose that $(x_1,y_1),...,(x_4,y_4)$ are discordant and $x_1\leq x_2< x_3\leq x_4$. Let
		\begin{align*}
			(x_5,y_5) &= (x_1,y_2), &
			(x_6,y_6) &= (x_2,y_1), &
			(x_7,y_7) &= (x_3,y_3), &
			(x_8,y_8) &= (x_4,y_4),
		\end{align*}
		so that $(x_5,y_5),...,(x_8,y_8)$ are simply
                $(x_1,y_1),...,(x_4,y_4)$ with $y_1,y_2$
                switched. Then $b_{1234} = b_{5678}$. Moreover, the
                same result is true if we flipped $y_3,y_4$ instead of
                $y_1,y_2$.
	\end{lem}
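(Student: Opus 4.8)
The plan is to use the permutation invariance of $b$ to turn the flip of $y_1,y_2$ into a statement about the $x$-coordinates alone. Writing $b_{5678}=\sum_{\pi\in S_4}a(x_{\pi(5,6,7,8)})\,a(y_{\pi(5,6,7,8)})$ and substituting $(x_5,\dots,x_8)=(x_1,\dots,x_4)$ together with $(y_5,\dots,y_8)=(y_{\tau(1)},\dots,y_{\tau(4)})$ for the transposition $\tau=(1\,2)$, I would reindex the sum (replacing the $x$-permutation $\sigma$ by $\rho=\tau\sigma$) to obtain
\[
b_{5678}=\sum_{\rho\in S_4}a(x_{\tau\rho(1,2,3,4)})\,a(y_{\rho(1,2,3,4)}),
\]
so that
\[
b_{1234}-b_{5678}=\sum_{\rho\in S_4}\bigl[a(x_{\rho(1,2,3,4)})-a(x_{\tau\rho(1,2,3,4)})\bigr]\,a(y_{\rho(1,2,3,4)}).
\]
The key observation is that $x_{\tau\rho(1,2,3,4)}$ is precisely $x_{\rho(1,2,3,4)}$ with the two entries carrying the values $x_1$ and $x_2$ interchanged in place.

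The next step is to reduce everything to a single pointwise claim: that $a(x_{\rho(1,2,3,4)})=a(x_{\tau\rho(1,2,3,4)})$ for \emph{every} $\rho\in S_4$. If this holds, every bracket above vanishes and $b_{1234}=b_{5678}$, independently of the $y$-values. This is exactly where the hypothesis $x_1\le x_2<x_3\le x_4$ is used, since it guarantees that the two interchanged values are both strictly smaller than $x_3$ and $x_4$. When $x_1=x_2$ the interchange is vacuous, so I would assume $x_1<x_2$.

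To prove the pointwise claim I would argue term by term using the representation \eqref{eq:a_as_ind}. Each of the four summands of $a$ has the form $I(\max(\cdot,\cdot)<\min(\cdot,\cdot))$ and thus compares a designated pair of entries against the complementary pair. Because $\{x_1,x_2\}$ lies strictly below $\{x_3,x_4\}$, such an indicator can equal $1$ only when the pair on the ``max'' side is exactly $\{x_1,x_2\}$ and the other pair is exactly $\{x_3,x_4\}$; in any grouping where a large value $x_3$ or $x_4$ sits on the max side, that maximum is already at least $x_3$ and the indicator is forced to $0$ both before and after the interchange. In the unique grouping that can be nonzero, swapping $x_1$ and $x_2$ leaves the max-side pair equal to $\{x_1,x_2\}$, so the indicator is preserved. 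Hence each indicator, and therefore $a$, is unchanged. For the flip of $y_3,y_4$ I would run the identical argument with $\tau'=(3\,4)$, reducing it to invariance of $a$ under interchanging the two \emph{largest} values $x_3,x_4$; this follows from the low-pair claim applied to the negated coordinates via the reflection symmetry $a(-z_1,-z_2,-z_3,-z_4)=a(z_1,z_2,z_3,z_4)$, which is immediate from the absolute values in the definition of $a$.

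The main obstacle is the bookkeeping inside the pointwise claim: one must check that every ``mixed'' grouping, in which an interchanged small value is paired against a large one, really does force each indicator to $0$ irrespective of ties such as $x_3=x_4$ (and symmetrically for the top-pair version). Once the case $x_2<x_3$ is used to rule out all such groupings, the equality of the two $U$-statistic contributions is a formal consequence of the reindexing above.
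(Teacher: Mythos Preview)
Your argument is correct and takes a genuinely different route from the paper.  The paper restricts attention to the sixteen permutations $\pi$ with $a(x_{\pi(1,2,3,4)})\neq 0$ and, depending on which of the cases $\{\pi(1),\pi(2)\}\in\{\{1,2\},\{3,4\}\}$ or $\{\pi(1),\pi(3)\}\in\{\{1,2\},\{3,4\}\}$ holds, pairs each such $\pi$ with a partner $\pi'$ obtained by swapping two \emph{positions}; it then checks that the combined contribution of $(\pi,\pi')$ to $b_{1234}$ equals its contribution to $b_{5678}$, carrying the $y$-values through the pairing.  Your reindexing instead pushes the flip onto the $x$-side and reduces everything to the single pointwise identity $a(x_{\rho(1,2,3,4)})=a(x_{\tau\rho(1,2,3,4)})$ for every $\rho$, i.e.\ invariance of $a$ under swapping the two entries carrying the \emph{values} $x_1,x_2$.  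Your indicator-by-indicator verification is sound: if the positions of $x_1$ and $x_2$ lie in the same block of the partition determining a given indicator, the swap does nothing to that block; if they lie in different blocks, each block contains one value from $\{x_1,x_2\}$ and one from $\{x_3,x_4\}$, so both before and after the swap the maximum of the ``max side'' is at least $x_3$ while the minimum of the other side is at most $x_2<x_3$, forcing the indicator to $0$.  The reflection $a(-z_1,-z_2,-z_3,-z_4)=a(z_1,z_2,z_3,z_4)$ cleanly reduces the $(3\,4)$-flip to the same claim.  Your approach is shorter, avoids the paper's case split over the form of $\pi$, and actually proves a slightly stronger statement (nothing about the $y$-values, in particular discordance, is used); the paper's pairing is closer in spirit to the explicit permutation enumeration used in the surrounding case analysis.
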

	
	\begin{proof}
		First note that, trivially,  $a(x_{\pi(1,2,3,4)}) = a(x_{\pi(5,6,7,8)})$  for any $\pi\in S_4$.   Let $\pi$ be any permutation so that $a(x_{\pi(1,2,3,4)})^2=1$.  From case (ii) we know that we must have $\{\pi(1),\pi(2)\} \in \{\{1,2\}, \{3,4\}\}$ or $\{\pi(1),\pi(3)\} \in \{\{1,2\}, \{3,4\}\}$. Suppose that $\{\pi(1),\pi(2)\}=\{1,2\}$, and let $\pi'\in S_4$ be the permutation where
		\begin{align*}
			\pi'(1) = \pi(2),\ \ \pi'(2) = \pi(1),\ \ \pi'(3) = \pi(3),\ \ \pi'(4) = \pi(4).
		\end{align*}
		Then clearly 
		$
			a(x_{\pi(1,2,3,4)}) = a(x_{\pi'(1,2,3,4)}) = a(x_{\pi(5,6,7,8)}) = a(x_{\pi'(5,6,7,8)})$
		but
		\begin{align*}
			a(y_{\pi(1,2,3,4)}) &= a(y_{\pi'(5,6,7,8)}), &
			a(y_{\pi'(1,2,3,4)}) &= a(y_{\pi(5,6,7,8)}),
		\end{align*}
		and thus
		\begin{multline*}
			a(x_{\pi(1,2,3,4)})a(x_{\pi(1,2,3,4)}) + a(x_{\pi'(1,2,3,4)})a(x_{\pi'(1,2,3,4)}) \\
			= a(x_{\pi(5,6,7,8)})a(x_{\pi(5,6,7,8)}) + a(x_{\pi'(5,6,7,8)})a(x_{\pi'(5,6,7,8)}).
		\end{multline*}
		As we may perform a similar procedure to all $\pi\in S_4$ with $a(x_{\pi(1,2,3,4)})^2=1$ (changing the choice of $\pi'$),  we see that $b_{1234} = b_{5678}$ as claimed.
		
		Finally, pairing $\pi$ with $\pi'$ given by
		\begin{align*}
			\pi'(1) = \pi(1),\ \ \pi'(2) = \pi(2),\ \ \pi'(3) = \pi(4),\ \ \pi'(4) = \pi(3)
		\end{align*}
		shows that this result still holds if we had flipped $y_3,y_4$ instead of $y_1,y_2$.
	\end{proof}
	By Lemma \ref{lem:flip}, we may assume that $x_1\leq x_2<x_3\leq x_4$ and $y_1<y_2$ and $y_3<y_4$. Note that, by the definition of discordant, we must have that $y_2 > y_3$ and $y_1<y_4$. From case (ii) we know that there are only 16 permutations $\pi$ for which $a(x_{\pi(1,2,3,4)}) \not= 0$ and they satisfy
	\begin{align*}
	\{\pi(1),\pi(2)\} \in \{\{1,2\}, \{3,4\}\}\text{ or }\{\pi(1),\pi(3)\} \in \{\{1,2\}, \{3,4\}\}.
	\end{align*}
	 If  $\{\pi(1),\pi(2)\} \in \{\{1,2\}, \{3,4\}\}$ and
         $\{\pi(1),\pi(3)\} \in \{\{1,4\},\{2,3\}\}$, then  we have $a(y_{\pi(1,2,3,4)}) = 0$.  Similarly, $a(y_{\pi(1,2,3,4)}) = 0$ if $\{\pi(1),\pi(3)\} \in \{\{1,2\}, \{3,4\}\}$ and $\{\pi(1),\pi(2)\} \in \{\{1,4\},\{2,3\}\}$. This leaves only 8 permutations $\pi\in S_4$ for which $a(x_{\pi(1,2,3,4)})a(y_{\pi(1,2,3,4)})$ may be non-zero, and we check these explicitly:
	 \begin{align*}
		a(x_{1,2,3,4})a(y_{1,2,3,4}) &= -1\cdot 1 = -1, & a(x_{2,1,4,3})a(y_{2,1,4,3}) &= -1\cdot 1 = -1, \\
		a(x_{3,4,1,2})a(y_{3,4,1,2}) &= -1\cdot 1 = -1, & a(x_{4,3,2,1})a(y_{4,3,2,1}) &= -1\cdot 1 = -1, \\
		a(x_{1,3,2,4})a(y_{1,3,2,4}) &= 1\cdot -1 = -1, & a(x_{2,4,1,3})a(y_{2,4,1,3}) &= 1\cdot -1 = -1, \\
		a(x_{3,1,4,2})a(y_{3,1,4,2}) &= 1\cdot -1 = -1, & a(x_{4,2,3,1})a(y_{4,2,3,1}) &= 1\cdot -1 = -1.
	\end{align*}
	We conclude that $b_{1234} = -8$ as claimed.	 
\end{enumerate}

%\section*{Acknowledgments}

%\bibliographystyle{amsalpha} 
\bibliographystyle{abbrvnat}
\bibliography{kendall_tau}
\end{document}